\renewcommand\footnotetextcopyrightpermission[1]{} 
\renewcommand\@formatdoi[1]{\ignorespaces}
\newcommand{\E}{\mathbf{E}}
\newcommand{\R}{\mathbb{R}}
\newcommand{\SF}{\mathrm{SF}}
\newtheorem{remark}{Remark}
\def\@copyrightspace{\relax}
\begin{document}
\title{Analyzing LoRa
long-range, 
low-power, wide-area networks 
 using stochastic geometry}

\author{Bart{\l}omiej B{\l}aszczyszyn}
\affiliation{%
  \institution{Inria/ENS}
  \streetaddress{Rue Simone Iff}
  \city{Paris}
  \postcode{75589}
}
\email{Bartek.Blaszczyszyn@ens.fr}

\author{Paul M\"uhlethaler}
\affiliation{%
  \institution{Inria}
 \streetaddress{Rue Simone Iff}
  \city{Paris}
  \postcode{75589}
}
\email{Paul.Muhlethaler@inria.fr}

\begin{abstract}
In this paper we present a simple, stochastic-geometric model of 
a wireless access network exploiting the LoRA (Long Range) protocol, which is a
non-expensive technology allowing for  long-range, single-hop connectivity 
for the Internet of Things. We assume a space-time Poisson model of packets
transmitted by LoRA  nodes to a fixed base station. 
Following previous studies of the impact of  interference~\cite{haab17,boro16},  
we assume that a given packet is successfully received when no interfering packet arrives with similar power before the given packet  payload phase. This is as a consequence of LoRa using different 
transmission rates for different link budgets (transmissions with smaller received powers use larger spreading factors) and 
LoRa intra-technology interference treatment.
Using our model, we study the scaling of the  packet reception probabilities  
per  link budget  as a function  of  the  spatial  density  of  nodes  and  
their rate  of  transmissions.  We consider both the parameter values 
recommended by the LoRa provider, as well as proposing LoRa tuning  to improve the equality of  performance for all link budgets. We also consider spatially
non-homogeneous distributions of LoRa nodes. We show also how a fair 
comparison to non-slotted Aloha can be made within the same  framework.
\end{abstract}

\keywords{
Internet of Things; Low-Power, Wide-Area Network; LoRa; stochastic geometry; Poisson process; propagation process; reception probability}

\maketitle

\section{Introduction}

Low-power, wide-area networks (LPWANs) will undoubtedly play a crucial role in the development of the 
Internet of Things. They are wireless cellular networks that,  in contrast to  LTE  (and also WiFi and Bluetooth), 
 do not focus on  high data rate communications. The goal of LPWANs is to ensure 
large coverage areas with reasonable data rates and low energy consumption. A good 
scaling in terms of the density of nodes is also a key requirement for these systems. 

Several systems in the field of LPWANs exist. 
Essentially, four options are available to customers:
Sigfox,  operator standards LTE-M or NB-IoT, and  several  proprietary solutions, LoRa (Long Range) being one of them.
Sigfox is 
offered as an operator service and the technology is proprietary. It operates in the 
968/902 MHz licence-free industrial scientific and medical (ISM) radio band. 
Sigfox acts as both the technology and the service provider and has quickly deployed 
a large number of base stations in Europe and beyond. Long Term Evolution Machine-to-Machine
LTE-M and  Narrow Band-IoT are the standards promoted by the traditional telecom operators 
that expect to propose solutions for  machine-to-machine traffic through standardized 
solutions, in particular in the framework of the forthcoming 5G systems. There is a significant number of 
proprietary solutions such as Accelus, Cyan's Cynet, Ingenu/On-Ramp, SilverSpring Starfish, etc.,
but only a limited amount of information about these systems is available. 

LoRa is also a proprietary technology developed by Cycleo of Grenoble, France, and acquired by Semtech in 2012.
As its name suggests, it mainly aims at long range transmissions with a high robustness, multipath 
and Doppler resistance, and low power.  LoRa networks often operate in the unlicensed ISM band (434 MHz and 
868 MHz in Europe and 433 MHz and 915 MHz in the USA), but can also use licensed 
bands in the spectrum ranging from 137 to 1020 MHz. 
LoRa uses  a kind of Chirp Spread Spectrum~(CSS) with Forward Error Correction (FEC), where wideband linear frequency modulated chirp pulses are used to encode information and the spreading factor is directly linked to 
the power decay between  the end device  and the base station.

We decided to focus our study on LoRa systems because, even though it is 
a proprietary technology, a significant amount of information is available on it,  see Section~\ref{sec_assumptions},  
making it possible to model and study LoRa networks. Moreover, LoRa industrial momentum seems to be
significant and commercial success for LoRa is likely. 

LoRa technology has already been studied, both  for the evaluation of pairs of simultaneous transmissions~\cite{haab17} and for the scaling evaluation of a large network~\cite{mipe16},  \cite{boro16}. However, to the authors' best knowledge, there is no 
commonly accepted analytical model of LoRa networks. 
A major difficulty in addressing performances of LoRa  consists in its 
multiparameter nature, with a crucial  rule played by the spreading factors
and, closely related to them,  packet transmission time.
An additional difficulty is the complexity of the signal-to-interference analysis of packet reception in CSS technology.

In this paper we propose a mathematical model allowing one to calculate  packet reception probabilities per link budget, according to the  spatial density of nodes and their rate of transmissions, as well as 
 all  main  LoRa parameters.
 It is based on  a 
Poisson rain (space-time) model of packet transmissions from the nodes to the base station (gateway). This model has already been used in~\cite{blaszczyszyn2010stochastic,blaszczyszyn2015interference} to study non-slotted Aloha system.

The remainder of this paper is organized as follows. In Section~\ref{sec_rel} we present some related studies, which are crucial to our approach.
In Section~\ref{sec_assumptions} we present  LoRa system assumptions and in Section~\ref{sec_net} we present our model with its analysis.
Section~\ref{sec_num}  presents numerical results.
In Section~\ref{s.Aloha} we briefly sketch  how LoRa can be compared to non-slotted Aloha using a Poisson rain model of packet transmissions.
Finally, Section~\ref{sec_conc} concludes the article.


\section{Related studies}
\label{sec_rel} 

A few papers, such as~\cite{pego16}, 
present measurements of existing LoRa systems and study 
the actual performances of the  end devices with respect to their relative distance to the  gateways.
They aim at optimising the parametrization of  LoRa networks.  However, due to the limited number of  end devices 
considered in these studies, it is difficult to gauge  the scaling properties of LoRa  networks, for instance in 
terms of the maximum number of   end devices supported with a given goodput rate. 
Moreover, these  studies do not  permit a fine-tuned analysis of 
collisions in LoRa networks. 

In contrast,~\cite{haab17} scrutinizes interference in LoRa in order to  
establish packet collision rules, which are  reproduced next  in a simulation model 
 allowing the authors to study  the scalability issues of LoRa networks.
 A real platform to test 
capture in LoRa networks is used in~\cite{boro16}.
The conclusions drawn from the tests lead to a collision model close 
to that derived in~\cite{haab17} and to similar scalability evaluation results. 
 We recall the  collision rules established in~\cite{haab17} and \cite{boro16} in 
 Section~\ref{subsec_coll} since we integrate them in our  stochastic-geometric LoRa network model.


\section{LoRa system description and assumptions}
\label{sec_assumptions}

LoRa systems are highly parameterizable. The carrier frequency can be chosen  
from  137 to 1020 MHz by steps of 61 kHz and  
the bandwidth (the width of frequencies in the radio band)
can be set to 125 MHz, 250 MHz or 500 MHz.
LoRa can use transmission powers from -4 dBm to 20 dBm.

A key parameter of LoRa 
is the spreading factor, which is the ratio  between the symbol rate and the chip rate.
Denoted in what follows by SF, and expressed at the logarithmic scale at base~2 (thus the increase by 1 
of the SF multiplies the chip per symbol rate by 2),
the SF of LoRa varies from 6 to 12. With SF = 6 the symbol rate is the highest but the transmission is the least 
robust.  The value of $\SF$  is fixed for individual transmissions depending on the received power, as explained in Section~\ref{subsec_sf}.

The physical layer of LoRa  also incorporates Forward Error 
Correction (FEC) that permits the recovery of bits of
information in the case of corruption by interference. This requires a small overhead of additional encoding of the data in
the transmitted packet thus reducing the Coding Rate (CR).  The available values of CR
are $4/5,5/6,4/7,4/8$ (with smaller values providing more reliability).

\subsection{Spreading factor and received power}
\label{subsec_sf} 

The spreading factor parameter SF, which  is the logarithm at base~2 of the  ratio  between the symbol and the chip rate,
is a key parameter of  LoRa.  Even if in principle it can be  set arbitrarily, it is obvious that its value must be
appropriately  chosen in order to optimize the performance of the network.
A reasonable choice is to make it inversely related to the received power: higher received power allows one to use a
 smaller spreading factor and thus reduce the packet transmission time, as specified in Section~\ref{subsec_duration}.
As an example, Table~\ref{tab1}  
gives the values of SF proposed in~\cite{sx12_ds}
for different thresholds of the received power, called {\em sensitivity}, for  transmissions with  $125$~kHz bandwidth.

\begin{table}[ht]
\begin{center}
\begin{tabular}{|c|c|}
  \hline
   SF        & Sensitivity   in dBm  \\
       &  (for 125 kHz bandwidth system)  \\
  \hline
  6 & -121\\
   \hline
  7 &  -124 \\
   \hline
  8 &  -127 \\
   \hline
  9 &  -130\\
   \hline
  10& -133\\
   \hline
  11& -135\\
   \hline
  12 & -137\\
  \hline 
\end{tabular}
\vspace{2ex}
\caption{Sensitivity (minimum received power thresholds) and corresponding spreading factors}
\label{tab1}
\end{center}
\end{table}
For example, 
when the  signal is received with at least $-121$~dBm of power then the spreading factor is set to its minimal value $\SF=6$.
When the received power is in the interval $[-124,-121)$ then $\SF=7$, etc. 
The smallest acceptable  received power $-137$~dBm triggers $\SF=12$.  

\subsection{Collision model}
\label{subsec_coll} 

A precise (signal to interference) analysis of the reception of  packets in LoRa chirp spread spectrum systems is complicated.
We will use a simplified {\em collision model}, to this end, inspired by the results presented in~\cite{boro16} and~\cite{haab17}. The first assumption we make is 
that {\em two transmissions using different values of SF do not interfere with each other}; this assumption is retained in the collision model 
described in~\cite{boro16}. In other words, the codes with different spreading factors are assumed orthogonal. 

The second assumption concerns collisions caused by  {\em simultaneous  transmissions 
using the same value of SF}. Different cases of this, identified in~\cite{haab17},  
depend on the time overlap of the (two or more) simultaneous  transmissions and their  received powers, called also  Received Signal Strength Indications (RSSIs).
More specifically, ~\cite{haab17} distinguishes five cases of possible overlap of the transmission period of  a given  packet with an interfering 
transmission, presented in Figure~\ref{collision_mod}, combined with  two cases regarding the difference between their RSSI values.
It shows that the given packet is always successfully received  if its RSSI is significantly  greater than that of the interfering packet or,
in the case of similar RSSI values, when the interfering transmission starts relatively late, precisely  during the payload part of the given packet
(case 5 in Figure~\ref{collision_mod}).
It is not successful when an interfering packet with a similar RSSI starts before the payload part  of the given packet 
(depicted as cases 1 to 4 in Figure~\ref{collision_mod}).


\begin{figure}[ht]
\begin{center}
\centerline{\includegraphics[scale=1.4]{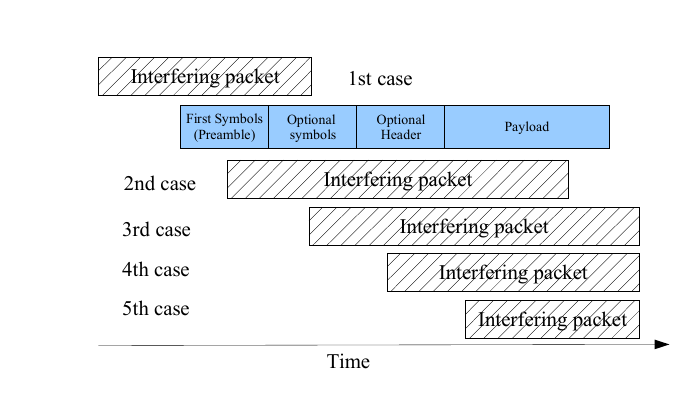}}
\caption{Collision model: different cases of  a given packet transmission (in blue) overlapping with an interfering transmission.
The reception of the given packet is  successful if its received power (RSSI) is significantly greater than that of the interfering packet or,
in the case of similar RSSI values, when the interfering transmission starts during the payload (case 5).
Cases 1--4 with similar RSSI values give collisions.
\label{collision_mod}}
\end{center}
\end{figure}

An additional advantage of relating the spreading factors to the received powers, as described in Section~\ref{subsec_sf},
is that any  two transmissions with significantly different received powers  
will use different  spreading factors and thus, according to the  assumption made above, will not interfere. 
Consequently, we can assume that a {\em given packet is successfully received when no interfering packet 
 arrives with similar power before its payload phase.}

\subsection{Packet transmission times}
\label{subsec_duration} 

The packet transmission (on air) time  depends on its composition (cf Figure~\ref{collision_mod}) and  the symbol transmission time $T_{\mathrm{symbol}}$,
which is related to the spreading factor $\SF$ and the bandwidth $\mathrm{BW}$ via a fundamental relation
\begin{equation}\label{e.Tsymbol}
T_{\mathrm{symbol}}=\frac{2^{\SF}}{\mathrm{BW}}\,.
\end{equation} 
The following specific expressions for transmission times (in seconds) of the two main parts of the packet are given in~\cite{sx12}.

Preamble time (with with the optional symbols): $$(4.25+np)T_{\mathrm{symbol}}.$$
    
    Payload time (with the optional header):
$$\!\!\!\Bigl(8+ \max\Bigl(\Bigl\lceil\frac{8\mathrm{PL}-4\SF+28+16-20\mathrm{H}}{4(\SF-2\mathrm{DE)}}\Bigr\rceil (\mathrm{CR} +4),0 \Bigr)T_{\mathrm{symbol}},$$ 
with notation explained in Table~\ref{tab2}.


\begin{table}[ht]
\begin{center}
\begin{tabular}{|c|c|}
 \hline
   BW   &  bandwidth in Hz  \\
 \hline  
  $\SF$ & Spreading Factor \\
   \hline
  $n_p$    &  number of additional preamble symbols  \\
  \hline
 PL  & payload in Bytes\\
  \hline
  H &  H=0 if header, 1 if no header present  \\
  \hline 
   DE &   DE=1 if low data rate optimization, 0 if not \\
  \hline 
  CR& Coding Rate\\ 
  \hline
\end{tabular}
\vspace{2ex}
\caption{LoRa  main parameters}
\label{tab2}
\end{center}
\end{table}

\section{Network model and analysis}
\label{sec_net}

In this section we propose a stochastic geometric  model of LoRa and present some results of its analysis.

\subsection{Poisson LoRa model with collisions}

Our LoRA network model consists of one base station located (without loss of generality) at the origin of the plane $\R^2$
and a space-time  Poisson process of packets sent to it. 
This model, called {\em Poisson rain} in~\cite{blaszczyszyn2010stochastic,blaszczyszyn2015interference},
assumes packet transmissions  initiated from points $\{X_i\}$ on the plane at time instants $\{T_i\}$, with $\Phi=\{(X_i,T_i)\}$ being
a Poisson point process on $\R^2\times\R$. Our default assumption is that  that $\Phi$ is homogeneous in space and time, 
with 
the expected number of  transmissions initiated per unit of time and space denoted by $\lambda$. 
However, in Section~\ref{subsec_fairness} we shall also consider some space-non-homogeneous, radially symmetric Poisson model of packet  transmissions.

We assume that the packets are transmitted with power $P_{\mathrm{tr}}$ and 
suffer from path-loss modeled by  the usual power-law function $l(r)=(\kappa r)^\beta$, of the transmission distance $r$
with some constants $\kappa>0$ and $\beta>2$, as well as independent propagation effects (such as those from multi-path fading, shadowing or other seemingly random phenomena perturbing the base-station-to-device signal) modeled by a  generi random variable  $F$. 
 Thus, a packet transmitted from a location $X_i$ (regardless of time) is received at the 
base station  with power equal to 
$$P_{\mathrm{rec}}(X_i):= \frac{P_{\mathrm{tr}}F_i}{l(|X_i|)} = 
\frac{P_\mathrm{tr} F_i}{(\kappa |X_i|)^{\beta}}\,$$
where, given $\{X_i\}$, $F_i$ are i.i.d. with common distribution of~$F$ and $|\cdot|$ is the Euclidean distance on the plane.

Following the system description presented in Section~\ref{subsec_sf}, we assume that the   spreading factors $\SF$ used for  transmissions of different packets are related to their received  powers. More specifically,
we consider a set of threshold powers (sensitivities) 
$P_1<P_2<\dots<P_N$, for some $N\ge 1$,
with the corresponding spreading factors $\SF_n$, $n=1,\ldots,n$, and we assume that 
the value $\SF_n$ is used for the transmission from $X_i$
when  $P_\mathrm{rec}(X_i)\in [P_n,P_{n+1})$,
with  $P_{N+1}:=\infty$. No successful reception  (packet loss) is assumed  when $P_\mathrm{rec}(X_i)<P_1$.

We have seen in Ssection~\ref{subsec_duration} that packet transmission times depend on many 
parameters,  in particular the spreading factor $\SF$.
Consequently, we denote by  $B_n$ the transmission time of a packet using spreading factor value $\SF_n$ and by $\Delta_n$ the transmission time of the initial part, required for the base station to  lock  on  the  transmission.
During this latter part,  an interfering packet of similar power can lead to a collision, as explained in~Section~\ref{subsec_coll}.
More precisely, we assume that a given 
packet transmitted from $X_i$
and reaching the base station with power $P_\mathrm{rec}(X_i)\in[P_n,P_{n+1})$
is  successfully received  if no any other packet 
is on air  with received power in   $[P_n,P_{n+1})$ during the initial time  $\Delta_n$ of the given packet's  transmission.

\subsection{Performance analysis}

We begin by recalling a  useful result regarding the 
process of received powers in an homogeneous Poisson network model with power-low path-loss model; see e.g~\cite[Lemma~4.2.2]{blaszczyszyn2018stochastic}.
In the context of our Poisson rain network model it can be formulated as follows. 
\begin{lemma}\label{lemma}
The values of powers received at the origin 
from all packet transmissions initiated in the Poisson rain model $\Phi$ during any time interval of length $T$ form an non-homogeneous Poisson point process on $(0,\infty)$ with intensity measure
$$\frac{2aT}{\beta}t^{-2/\beta-1} dt,$$
where
$$a =\frac{\pi \lambda P^{2/\beta}_{\mathrm{tr}}  E[F^{2/\beta}]}{\kappa^2},$$
$F$ is the generic random propagation in the wireless channel, 
and $\lambda$ is the expected number of  transmissions initiated per unit of time and space.
\end{lemma}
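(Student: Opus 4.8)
The plan is to realize the received-power process as the image of a marked Poisson point process under an explicit measurable map, and then to identify the image intensity by a direct computation in polar coordinates. By time-stationarity of $\Phi$ we may take the time interval to be $[0,T]$ without loss of generality.

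First I would restrict $\Phi$ to $\R^2\times[0,T]$, which is a Poisson point process with intensity $\lambda\,dx\,dt$, and project onto the spatial coordinate: by the mapping (projection) theorem the points $\{X_i:T_i\in[0,T]\}$ form a Poisson point process on $\R^2$ with intensity $\lambda T\,dx$. Attaching to each $X_i$ its independent propagation mark $F_i$ (i.i.d.\ copies of $F$), the independent marking theorem yields a Poisson point process on $\R^2\times(0,\infty)$ whose intensity is the product of $\lambda T\,dx$ with the law of $F$. I would then apply the mapping theorem a second time, to the measurable map $(x,f)\mapsto P_{\mathrm{tr}}f/(\kappa|x|)^\beta$, so that the received powers $\{P_{\mathrm{rec}}(X_i):T_i\in[0,T]\}$ form a Poisson point process on $(0,\infty)$ whose intensity measure $\Lambda$ is the corresponding pushforward.

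It remains to compute $\Lambda$, which is most convenient via its tail: for $t>0$,
$$\Lambda\bigl((t,\infty)\bigr)=\lambda T\int_{\R^2}\Prob\!\left(\frac{P_{\mathrm{tr}}F}{(\kappa|x|)^\beta}>t\right)dx=\lambda T\int_{\R^2}\Prob\!\left(|x|<\tfrac1\kappa\bigl(\tfrac{P_{\mathrm{tr}}F}{t}\bigr)^{1/\beta}\right)dx.$$
Passing to polar coordinates and using Fubini--Tonelli, $\int_{\R^2}\Prob(|x|<R(F))\,dx=\pi\,E[R(F)^2]$ with $R(F)=\kappa^{-1}(P_{\mathrm{tr}}F/t)^{1/\beta}$, which gives $\Lambda((t,\infty))=\pi\lambda T\kappa^{-2}P_{\mathrm{tr}}^{2/\beta}E[F^{2/\beta}]\,t^{-2/\beta}=aT\,t^{-2/\beta}$. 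Differentiating in $t$ produces the density $\tfrac{2aT}{\beta}\,t^{-2/\beta-1}$, as claimed.

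The only point requiring a little care -- and the main (mild) obstacle -- is that $\Lambda$ is an infinite measure, since infinitely many far-away transmissions accumulate at arbitrarily small received powers. This is harmless: $\Lambda((t,\infty))<\infty$ for every $t>0$ whenever $E[F^{2/\beta}]<\infty$ (implicit in the statement), so $\Lambda$ is locally finite on $(0,\infty)$ and the mapping theorem legitimately delivers a Poisson point process there. Note also that the thresholds $P_n$ and spreading factors $\SF_n$ play no role in this lemma; it concerns only the superposition of all received powers, and the partition into link-budget classes is applied afterwards.
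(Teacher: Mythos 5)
Your proof is correct and matches the paper's approach: the paper cites this as a known result and proves its inhomogeneous generalization (Lemma~\ref{lemma-inhomo}) by exactly the same route, namely the displacement/mapping theorem for Poisson processes followed by computing the tail of the mean measure in polar coordinates and differentiating. Your computation specializes that argument to the homogeneous case ($\alpha=0$) and correctly recovers the intensity $\frac{2aT}{\beta}t^{-2/\beta-1}\,dt$, with the local-finiteness remark being a welcome, if minor, addition.
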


\subsubsection{Successful reception probabilities}
We now consider any given (arbitrary) packet transmission  arriving at  the base station with power within the 
interval, $[P_n,P_{n+1})$ for some $n=1,\ldots,N$.
We denote by $\Pi_n$ the probability  that this transmission is successfully   received by the base station. 
As previously stated, the transmission of this packet takes time $B_n$,
with the initial time $\Delta_n$ required for the base station to lock on this transmission.  
Using Lemma~\ref{lemma} we obtain our first result.
\begin{proposition}\label{prop.Pin}
\begin{align*}
\Pi_n &= \exp \Big(-   a_n (P_n^{-2/\beta}   -  P_{n+1}^{-2/\beta}  )  \Big)\\
\noalign{\text{for $n=1,\ldots,N-1$,  and}}
\Pi_N &= \exp \Big(-   a_N P_N^{-2/\beta} \Big),\\
\noalign{\text{where}} 
a_n& =a (B_n+\Delta_n).
\end{align*}
\end{proposition}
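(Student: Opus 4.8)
The plan is to express the event that the tagged packet is successfully received as a void (empty‑space) event for the Poisson process of received powers furnished by Lemma~\ref{lemma}, and then simply read off its probability.

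First I would fix the tagged transmission and, using the time‑stationarity of the Poisson rain $\Phi$, assume it is initiated at time $0$. Since it reaches the base station with power in $[P_n,P_{n+1})$, it uses spreading factor $\SF_n$, is on air during $[0,B_n]$, and needs the initial sub‑interval $[0,\Delta_n]$ for the base station to lock on. Conditioning on this tagged point, Slivnyak's theorem for the Poisson process $\Phi$ guarantees that the remaining transmissions still form the original homogeneous Poisson rain, so the relevant probabilities may be computed under that law.

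Next I would pin down which other transmissions can ruin the reception. A transmission started at time $s$ is ``on air with received power in $[P_n,P_{n+1})$ during the initial time $\Delta_n$'' exactly when (i) its received power lies in $[P_n,P_{n+1})$, in which case it too uses $\SF_n$ and hence has the same on‑air length $B_n$, and (ii) the interval $[s,s+B_n]$ meets $[0,\Delta_n]$, i.e. $s\in[-B_n,\Delta_n]$. Thus the harmful transmissions are precisely those initiated within a time window of length $B_n+\Delta_n$ whose received power falls in the band $[P_n,P_{n+1})$, and the tagged packet is successfully received iff there is no such transmission.

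Finally I would invoke Lemma~\ref{lemma} with $T=B_n+\Delta_n$: the powers received from transmissions initiated in that window form a non‑homogeneous Poisson process on $(0,\infty)$ with intensity $\tfrac{2a(B_n+\Delta_n)}{\beta}\,t^{-2/\beta-1}\,dt$, and its restriction to $[P_n,P_{n+1})$ counts the harmful transmissions, a Poisson random variable with mean
\[
\int_{P_n}^{P_{n+1}}\frac{2a(B_n+\Delta_n)}{\beta}\,t^{-2/\beta-1}\,dt
= a(B_n+\Delta_n)\bigl(P_n^{-2/\beta}-P_{n+1}^{-2/\beta}\bigr).
\]
Writing $a_n=a(B_n+\Delta_n)$ and taking the void probability $e^{-(\cdot)}$ then yields $\Pi_n$ for $n=1,\dots,N-1$; the case $n=N$ is the same computation under the convention $P_{N+1}=\infty$, which kills the second term. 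I expect the only genuinely delicate point to be the timing bookkeeping in the middle step — converting ``on air during the lock‑on window'' into the statement that the interferer's \emph{initiation} time lies in an interval of length $B_n+\Delta_n$, using that a same‑band interferer has the same packet duration $B_n$ — together with the (standard) appeal to Slivnyak's theorem that licenses treating the tagged packet as a fixed, given point.
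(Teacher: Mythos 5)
Your proposal is correct and follows essentially the same route as the paper's proof: identify the success event as a void event for the Poisson process of received powers from Lemma~\ref{lemma} over a window of length $B_n+\Delta_n$ restricted to the power band $[P_n,P_{n+1})$ (with $P_{N+1}=\infty$ for $n=N$), and evaluate the resulting integral. Your extra care with the timing bookkeeping (interferer initiation times in an interval of length $B_n+\Delta_n$ because same-band packets share the duration $B_n$) and the explicit appeal to Slivnyak's theorem merely spell out what the paper leaves implicit in its statement of the collision model.
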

\begin{proof}
According to the adopted collision model, $\Pi_n$ is equal to the probability that there is no transmission in the Poisson rain model during the interval of  time of duration $B_n+ \Delta_n$, reaching the base station with the power
in the same interval  $[P_n,P_{n+1})$,
assuming the convention $P_{N+1}=\infty$. By  Lemma~\ref{lemma}
and the well known formula for Poisson void probabilities 
(cf e.g.,~\cite[Section~3.4.1]{blaszczyszyn2018stochastic})
$$\Pi_n = \exp\Big( \frac{-2a_n}{\beta} \int_{P_n}^{P_{n+1}} t^{-2/\beta-1} dt  \Big).$$
Direct evaluation of the above integral completes the proof.
\end{proof}
Note that the transmission times $B_n$ and $\Delta_n$
depend  on the spreading factor $\SF_n$ used for the transmissions
with received powers in $[P_n,P_{n+1})$,
as well as on other parameters, as explained in Section~\ref{subsec_duration}.

\subsubsection{Equalizing the reception probabilities} 
\label{subsec_fairness} 
We assume now that all LoRa model parameters are given except 
the power thresholds $P_1,\ldots,P_N$
(sensitivities), which we want to fix  
 so that  the packets arriving  with all powers not smaller than~$P_1$ have the same reception probability.  
We have the following result.
\begin{proposition}\label{prop.fairness}
For a given $\Pi\in(0,1)$ assume 
$$P_N= \Big(-\frac{1}{a_N} \log(\Pi)\Big)^{-\beta/2}$$
and 
$$P_{n}= \Big(- \sum_{i=n}^N  \frac{1}{a_i}   \log(\Pi)\Big)^{-\beta/2}$$ 
for $n=1,\ldots,N-1$.
Then the reception probabilities are constant  $\Pi_n=\Pi$.
\end{proposition}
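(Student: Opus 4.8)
The plan is to treat the conditions $\Pi_n=\Pi$, $n=1,\ldots,N$, with $\Pi_n$ given by Proposition~\ref{prop.Pin}, as a triangular system of equations in the unknowns $P_1,\ldots,P_N$ and to solve it by backward recursion starting from $n=N$. Writing $u_n:=P_n^{-2/\beta}$ turns the nonlinear system into a linear one: the identity $\Pi_N=\exp(-a_N u_N)=\Pi$ gives $u_N=-\frac{1}{a_N}\log\Pi$, while for $n=1,\ldots,N-1$ the identity $\Pi_n=\exp(-a_n(u_n-u_{n+1}))=\Pi$ gives the recursion $u_n=u_{n+1}-\frac{1}{a_n}\log\Pi$.

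Next I would unroll this recursion. A straightforward induction on $n$, downward from $N$, shows that $u_n=-\big(\sum_{i=n}^N \frac{1}{a_i}\big)\log\Pi$ for every $n$: the base case is the formula just obtained for $u_N$, and the inductive step is merely $u_n=u_{n+1}-\frac{1}{a_n}\log\Pi=-\big(\sum_{i=n+1}^N \frac{1}{a_i}\big)\log\Pi-\frac{1}{a_n}\log\Pi$. Inverting the substitution, $P_n=u_n^{-\beta/2}$, yields exactly the two displayed formulas of the proposition. Conversely, substituting these $P_n$ back into Proposition~\ref{prop.Pin} recovers $\Pi_n=\Pi$ for all $n$, so the chosen thresholds do equalize the reception probabilities.

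The only point that needs a word of care is that the formulas define a legitimate threshold vector. Since $\Pi\in(0,1)$ we have $-\log\Pi>0$, and since each $a_i=a(B_i+\Delta_i)>0$, every $u_n$ is a sum of strictly positive terms, hence positive; therefore $P_n=u_n^{-\beta/2}$ is well defined and finite. Moreover $u_1>u_2>\cdots>u_N>0$, because each step of the recursion removes a positive summand, and $x\mapsto x^{-\beta/2}$ is strictly decreasing on $(0,\infty)$ for $\beta>2$; hence $P_1<P_2<\cdots<P_N$, matching the ordering required of the sensitivities in the model. No genuine obstacle arises: the proposition is essentially a change of variables followed by a telescoping sum, so the main thing to get right is the bookkeeping of signs (recall $\log\Pi<0$) and the direction of the recursion.
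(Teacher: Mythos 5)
Your proof is correct and follows essentially the same route as the paper, which simply substitutes the stated values of $P_n$ into the formulas of Proposition~\ref{prop.Pin}; your change of variables $u_n=P_n^{-2/\beta}$ and backward recursion is just the natural derivation of those values, and your back-substitution step is exactly the paper's argument. The added check that $P_1<P_2<\cdots<P_N$ and that the thresholds are well defined is a sensible extra verification but not something the paper spells out.
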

\begin{proof}
The result follows directly from Proposition~\ref{prop.Pin}
by substituting the values of  $P_n$ in the expressions for $\Pi_n$, for $n=1,\ldots,N$.
\end{proof}
Obviously, the larger target value $\Pi$, the larger all power thresholds $P_n$, in particular~$P_1$, meaning that more transmissions will remain unsuccessful because having a received power smaller than the minimal acceptable value~$P_1$.

\subsubsection{A non-homogeneous spatial density of transmission} 
\label{sub_non_hom}
 We consider now a non-homogeneous pattern of transmissions.
 More specifically, let the process of transmissions form a space-time 
 Poisson point process  with the local  density
 of transmission initiated at distance $r$ from the origin being equal to $\lambda r^\alpha$, for some parameter $\alpha$. 
A natural assumption is $\alpha<0$, meaning that the density decays within the distance, but $\alpha>0$ is also possible.
Obviously $\alpha=0$ is the previously considered homogeneous case.

In order to study this case we need the following extension of Lemma~\ref{lemma}.
 \begin{lemma}\label{lemma-inhomo}
The values of powers received during any time interval of length $T$ at the origin in the Poisson rain model, with intensity of transmissions $\lambda_sr^\alpha$ at distance $r$ from the origin,  form an non-homogeneous Poisson point process on $(0,\infty)$ with intensity measure
 $$\frac{2a'T}{\beta}t^{-(\alpha+2)/\beta-1} dt,$$
 where
 $$a =\frac{\pi \lambda P^{(\alpha+2)/\beta}_{\mathrm{tr}}
 \E[F^{(\alpha+2)/\beta}]}{\kappa^{\alpha+2}}.$$  

\end{lemma}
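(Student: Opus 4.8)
The plan is to obtain Lemma~\ref{lemma-inhomo} exactly the way the homogeneous Lemma~\ref{lemma} is obtained in~\cite[Lemma~4.2.2]{blaszczyszyn2018stochastic}: by pushing the space-time Poisson process of transmissions forward through the map ``transmitter location (and propagation mark) $\mapsto$ received power'', the only new ingredient being that the spatial intensity of transmissions is now $\lambda r^\alpha$ rather than a constant.

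First I would fix a time window of length $T$. By the restriction property of Poisson processes together with time-homogeneity, the transmissions initiated during that window, marked by their i.i.d.\ propagation variables $F_i$, form a Poisson point process on $\R^2\times(0,\infty)$ with intensity measure $T\lambda\,|x|^\alpha\,dx\otimes\mathbf{P}_F(df)$, where $\mathbf{P}_F$ is the law of the generic propagation variable $F$ and the factor $T$ comes from integrating out the time coordinate. Then I would apply the mapping theorem for (marked) Poisson point processes to the measurable map $(x,f)\mapsto t:=P_{\mathrm{tr}}f/(\kappa|x|)^\beta$, which assigns to each transmission the power with which it is received at the origin: its image is a Poisson point process on $(0,\infty)$ whose intensity measure $\Lambda$ is the pushforward of $T\lambda\,|x|^\alpha\,dx\otimes\mathbf{P}_F(df)$.

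To identify $\Lambda$ I would compute its tail. For $u>0$, Fubini gives $\Lambda\big((u,\infty)\big)=T\lambda\,\E\big[\int_{\R^2}\Ind\{|x|^\beta<P_{\mathrm{tr}}F/(\kappa^\beta u)\}\,|x|^\alpha\,dx\big]$; passing to polar coordinates turns the inner integral into $\tfrac{2\pi}{\alpha+2}\,(P_{\mathrm{tr}}F)^{(\alpha+2)/\beta}\kappa^{-(\alpha+2)}u^{-(\alpha+2)/\beta}$, so that $\Lambda\big((u,\infty)\big)=\tfrac{2a'T}{\alpha+2}\,u^{-(\alpha+2)/\beta}$ with $a'$ the constant displayed in the statement. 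Differentiating in $u$ yields the density $\tfrac{2a'T}{\beta}\,t^{-(\alpha+2)/\beta-1}\,dt$, as claimed; specialising to $\alpha=0$ recovers Lemma~\ref{lemma}, a convenient consistency check.

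The computation is otherwise routine; the one point that needs attention — and the closest thing to an obstacle — is integrability. The radial integral $\int_0^R r^{\alpha+1}\,dr$ converges at $0$ precisely when $\alpha>-2$, and this same condition makes $t\mapsto t^{-(\alpha+2)/\beta-1}$ integrable on every set $(\varepsilon,\infty)$, hence $\Lambda$ locally finite on $(0,\infty)$ (it remains infinite near $0$, reflecting infinitely many arbitrarily weak signals, which is harmless). So I would record the standing assumption $\alpha>-2$, noting that it holds in all cases of interest, both $\alpha<0$ (density decaying with distance) and the admissible $\alpha>0$; under it the mapping theorem applies and the argument goes through verbatim.
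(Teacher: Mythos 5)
Your proof is correct and follows essentially the same route as the paper's: both invoke the displacement/mapping theorem for Poisson processes, compute the tail of the intensity measure of received powers by integrating the spatial density $\lambda r^\alpha$ in polar coordinates up to the radius determined by the power condition, and differentiate to obtain the stated density. Your explicit remark that $\alpha>-2$ is needed for the radial integral to converge (and for local finiteness of the intensity measure away from $0$) is a worthwhile addition that the paper leaves implicit.
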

\begin{remark}Note that the intensity of received powers in our 
non-homogeneous model, given in  Lemma~\ref{lemma-inhomo}, coincides 
with that of the homogeneous model presented in Lemma~\ref{lemma}, with 
the path-loss exponent $\beta$ and the 
intensity of transmissions $\lambda$ replaced respectively  by
\begin{align*}
\beta' &= \frac{2\beta}{\alpha+2}\\
\lambda'& = \frac{2\lambda}{(\alpha+2)\kappa^{\alpha}}\,.
\end{align*}
Consequently, we observe that the performance  of the considered  non-homogeneous LoRa network 
(regarding metrics based   on the process of received powers,
as e.g. the reception probabilities $\Pi_n$) is equivalent to  the performance the homogeneous LoRa model
with the above parameter modifications.
This is yet another instance of the network equivalence property formulated in~\cite{netequivalence}.
\end{remark}
\begin{proof}[Proof of Lemma~\ref{lemma-inhomo}]
By the displacement theorem for Poisson processes (see e.g. \cite[Theorem 4.4.2]{blaszczyszyn2018stochastic})
the process of received powers is a Poisson process. 
To  compute its mean measure denoted by $\Lambda(t)$, we calculate 
the mean number of transmissions started   within a unit interval of time and received 
with  power greater than $t>0$ 
$$\Lambda(t) =\E[\#\{(X_i,T_i) \in   \Phi:(\kappa |X_i|)^{-\beta}P_{\mathrm{tr}}F_i > t, T_i\in[0,1]\}].$$
Writing the condition on the received power in the form 
$$ |X_i| < \Big(\frac{P_{\mathrm{tr}}F_i}{t}\Big)^{1/\beta} 1/\kappa $$
and using the spatial density of transmissions, 
we obtain 
\begin{align*}
\Lambda(t) &=   2 \pi \lambda     \E \Big [     \int_0 ^{ (P_{\mathrm{tr}}F_i/t)^{1/\beta}1/\kappa}  r^{\alpha}r dr       \Big]\\
&=   \frac{2 \pi \lambda}{(\alpha+2) \kappa^{\alpha+2}}     \E \Big [ F^{(\alpha+2)/\beta}   \Big]  {t}^{-(\alpha+2)/\beta}.
\end{align*}
Differentiating $\Lambda(t)$ with respect to $t$ and multiplying by $-1$ one concludes the proof.
\end{proof}

\section{Numerical results} 
\label{sec_num}
In this section   we present numerical results obtained using the  model and its  analysis developed in Section~\ref{sec_net}, and assuming    parameter values  typical for a rural LoRa network.

We first consider  a homogeneous deployment of  nodes 
paramet\-rized by the  mean    number $N_\mathrm{nodes}$ of nodes within a radius of $R=8000\mathrm{m}$ around the base station, which  corresponds to the spatial density  $\lambda_s= \frac{N_\mathrm{nodes}}{\pi R^2}$ of nodes per  per~m$^2$.
Suppose that each node transmits a packet to the base station  every $16.666$ minutes on average, which is equivalent to the rate  $\lambda_t= 0.001$ transmissions per second.
In order to represent this network  we  use the Poisson rain model with space-time intensity  $\lambda:=\lambda_s\lambda_t$ of transmissions per second per m$^2$.
\begin{remark}
Note that when use the Poisson rain  model  we are making an approximation
of the real network. In fact, "real" nodes are fixed in space and send  (assume independent Poisson) streams of packets in time. Nodes in Poisson rain model are not fixed in space: we may see them rather as  ``born'' at some time, transmitting a packet and ``dying'' immediately after.
While it is possible to come up with a model representing fixed nodes in  space, see e.g. the Poisson-renewal model developed in~\cite{blaszczyszyn2010stochastic,blaszczyszyn2015interference}
for non-slotted Aloha, its analysis is more complicated,
with less closed-form results. Moreover, most importantly,  the two
models  provide very close results. This can be theoretically explained by the convergence of the space-time pattern of transmissions in the real situation (with fixed nodes) to the Poisson rain model  when  $\lambda_s\to\infty$  and $\lambda_t\to0$ with $\lambda=\lambda_s\lambda_t=\mathrm{constant}$. 
Numerical results provided in~\cite{blaszczyszyn2010stochastic,blaszczyszyn2015interference}
confirm this statement.
\end{remark}

We use the Hata model~\cite{ha80} to determine a suitable value for the path-loss exponent~$\beta$. In this model the path-loss in dB at  distance $r$  is $ (44.9 -6.55 \hskip 0.1 cm \log_{10} (h_B))  \log_{10}(r)$ where $h_B$ is the 
height of the base station antenna in meters.  We assume that the height of the base station antenna is 
$30$~m and obtain the path-loss function  $35.22 \hskip 0.1 cm \log_{10}(r)$ in dB. Thus we choose~$\beta =3.5$. 
We assume the path loss constant to be $\kappa=0.5$.
Our default assumption regarding propagation effects  is  Rayleigh fading of mean 1. With this choice we have $E(F^{2/\beta})=\frac{2 \Gamma(2/  \beta)}{\beta}$. However, in Section~\ref{subsec.Fading}
we shall also consider the no-fading case and log-normal shadowing.

We assume that all nodes transmit with power $P_\mathrm{tr}=10$~dBm.
Other LoRa parameters are specified in Table~\ref{tab3}.
\begin{table}[ht]
\begin{center}
\begin{tabular}{|c|c|}
 \hline
  BW   &  $125$~kHz   \\
 \hline  
  $\SF$ & $\{6,\ldots,12\}$ \\
   \hline
  $n_p$    &  6 \\
  \hline
 PL  & 20 Bytes\\
  \hline
  H &  H=0 (with header)   \\
  \hline 
   DE &   DE=0 (no low data rate optimization) \\
  \hline 
  CR& $4/5$ \\ 
  \hline
\end{tabular}
\vspace{2ex}
\caption{LoRa parameters used in our model}
\label{tab3}
\end{center}
\end{table}

\noindent
Note that we have $N=7$ different values of the spreading factor~$\SF$ (running from 6 to 12).
The packet and preamble transmission times $B_n$ and $\Delta_n$,
$n=1,\ldots,7$, corresponding to these values 
 are calculated using the expressions presented in Section~\ref{subsec_duration}.

\subsection{Reception probabilities} 
\label{results-probability}
Let us assume received power thresholds (sensitivities) for different spreading factors as
in Table~\ref{tab1}.
That is, $P_1,\ldots,P_7$ expressed in dBm are as in the right column of this table from bottom  to top ($P_1=10^{-13.7}\mathrm{dBm}$,
$P_2=10^{-13.5}\mathrm{dBm}$, etc,  $P_7=10^{-12.1}\mathrm{dBm}$
and the corresponding values of the spreading factor are
$\SF_1=12$,  $\SF_2=12$, etc, $S_7=6$).
Using  the expressions given in Proposition~\ref{prop.Pin} we calculate the probability of successful reception $\Pi_n$ of packets receiving with powers in successive intervals $[P_n,P_{n+1})$, $n=1,\ldots,7$.
These probabilities, calculated assuming  Rayleigh fading, are shown in Figure~\ref{Plot1}. 
Note that increasing $\SF$ values correspond to nodes
received with  smaller powers, thus located statistically in larger distances to the base station.
(When some fading is assumed then this relation is not a  deterministic function of the distance.) 
We observe that, up to the density of $N_\mathrm{nodes}=2000$ nodes 
in the disk of radius $8$~km, all transmissions have a very high reception probability (close to~1), except for the three weakest power categories  ($\SF=10,11,12$).

\begin{figure}[t!]
\begin{center}
\centerline{\includegraphics[scale=0.75]{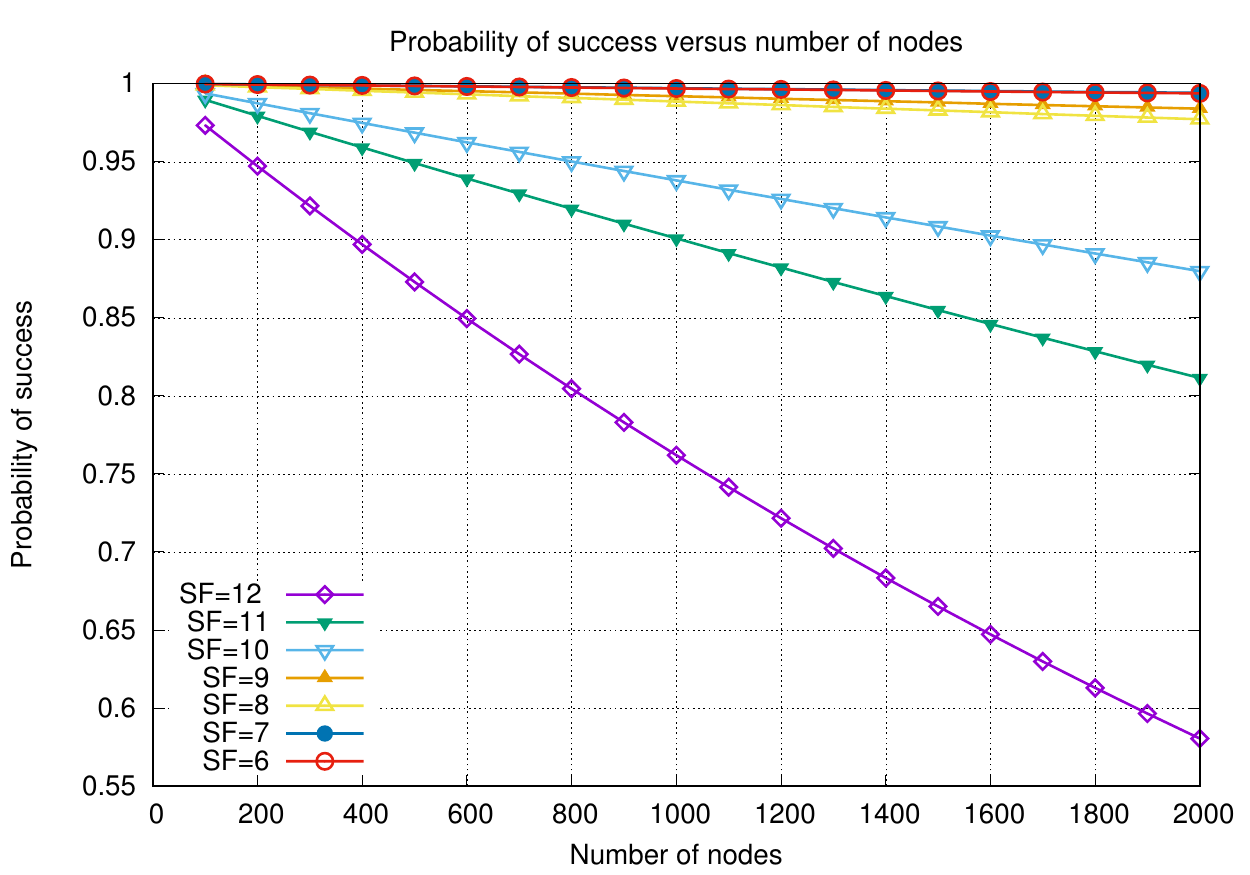}}
\caption{Probability of successful reception versus the node density  for all  spreading factors $\SF$ in the Rayleigh fading case.
\label{Plot1}}
\end{center}
\end{figure}

\subsection{Equalizing reception probabilities for all nodes}
We assume the density of nodes corresponding to $N_\mathrm{nodes}=1000$.
Using the results of 
Section~\ref{subsec_fairness}, 
we calculate  modified values of the sensitivity which allow all nodes to have the same reception probability equal to $\Pi\approx0.95$.
These modified sensitivities  are presented in Table~\ref{tab4}, where we also recall  the values recommended by the provider of LoRa technology.
Observe that for $\SF=7, 8, 9$ we obtain the values smaller than these recommended by  
LoRa provider, whereas for $\SF=10,11,12$ we obtain bigger values. Note that the reception probabilities for these two groups in case of  the recommended configuration are, respectively, larger and smaller than the target value~$\Pi=0.95$. 
Note also that the smallest modified sensitivity is $-135~\mathrm{dBm}>-137\mathrm{dBm}$, which means that 
we will not be able to receive packets with  some small powers
still acceptable for the original configuration,
unless we decrease the target reception probability $\Pi$.
\begin{table}[ht]
\begin{center}
\begin{tabular}{|c|c|c|}
  \hline
           & \multicolumn{2}{|c|}{Sensitivity in dBm}   \\
  SF     & equalizing reception & LoRa recommended   \\
  \hline
  6 &  -121 & -121   \\
   \hline
  7 &  -124 &  -126 \\
   \hline
  8  & -127&  -129 \\
   \hline
  9 & -130  &   -131\\
   \hline
  10 & -133 & -133  \\
   \hline
  11 &  -134& -135\\
   \hline
  12 &  -135 & -137\\
  \hline 
\end{tabular}
\vspace{2ex}
\caption{Sensitivities equalizing 
reception probabilities to $\Pi\approx0.95$ for all spreading factors in the network of density  $N_\mathrm{nodes}=1000$ nodes in the radius of $8$~km, compared to  the sensitivities recommended by LoRa technology provider. Note  that $-135>-137$, and thus packets with some small powers are lost with respect to the  original  configuration.}
\label{tab4}
\end{center}
\end{table}

\subsection{Effect of the fading law} 
\label{subsec.Fading}
Now, we study  the influence of the  propagation effects modeled by 
the distribution of~$F$. We consider three cases: 
no fading i.e. $F \equiv 1$, Rayleigh fading of mean 1 
(exponential $F$) and log-normal fading of mean 1 
($ F = \exp(-\sigma^2/2+\sigma Z )$  with   $Z$  standard  normal).
In our  model this distribution impacts the results only via its moment $\E(F^{2/\beta})$. Recall that
$$
\E(F^{2/\beta})=
\begin{cases}
1 \quad&\text{when no fading,}\\
2 \Gamma(\frac{2}{\beta})/\beta
&\text{for Rayleigh fading,}\\
\exp(\sigma^2(2-\beta)/\beta^2  
&\text{for log-normal shadowing.}
\end{cases}
$$
\noindent
  In Figure~\ref{Plot3} we show how these three different cases of propagation effects impact  the probability of successful transmission of the weakest transmissions,
  that is using the spreading factor  $\SF=12$. 
  We observe that the performance of this category of transmissions  is slightly improved by random propagation effects.  This  can be generalized to all transmissions using Jensen's inequality
$\E(F^{2/\beta}) \le (\E(F))^{2/\beta}=1$   (recall $\beta>2$, $\E(F)=1$)
and observing that the network with some random propagation effects is equivalent to a
sparser network without propagation effects, cf~\cite{netequivalence} and hence generates smaller interference,.

\begin{figure}[t!]
\begin{center}
\centerline{\includegraphics[scale=0.75]{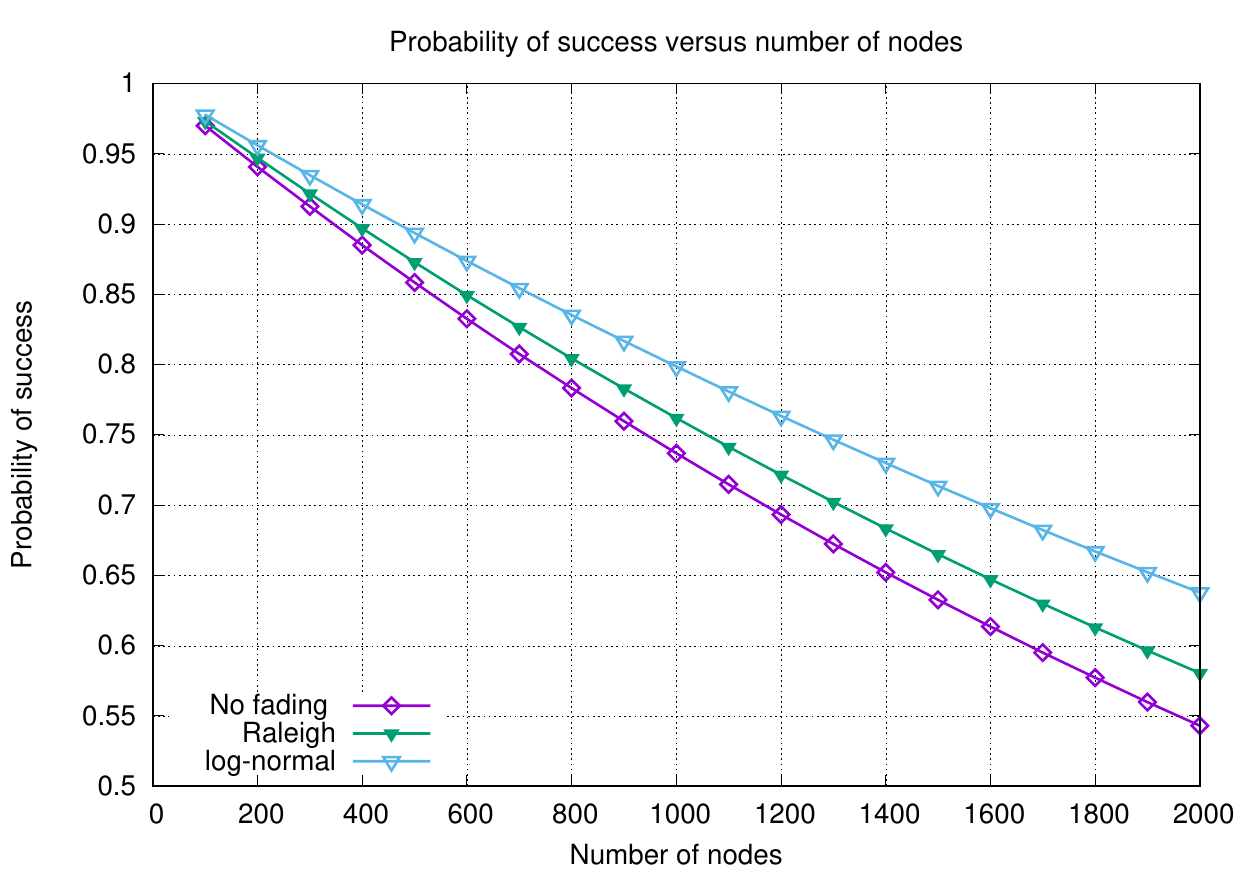}}
\caption{Impact of the propagation effects on the probability of successful transmission of the weakest transmissions ($\SF=12$). Log-normal variance is taken $\sigma=2$dB.
\label{Plot3}}
\end{center}
\end{figure}

\subsection{Effect of decaying density of nodes} 
Finally we consider a  non-homogeneous distribution of nodes,
which we assume  decreasing with the distance $r$ to the base station  as the function $\lambda_sr^{-1/5}$.
Here $\lambda_s$ and all LoRa parametres are assumed as in Section~~\ref{results-probability}.
We use the results of Section~\ref{sub_non_hom} to compute the probability of 
 successful transmission for  various spreading factors. 
The results are shown in Figure~\ref{Plot4}. We observe a very significant improvement and equalization of the probabilities of success for all  categories of transmissions,
with all values presented being bigger than $0.9$.

\begin{figure}[t!]
\begin{center}
\centerline{\includegraphics[scale=0.75]{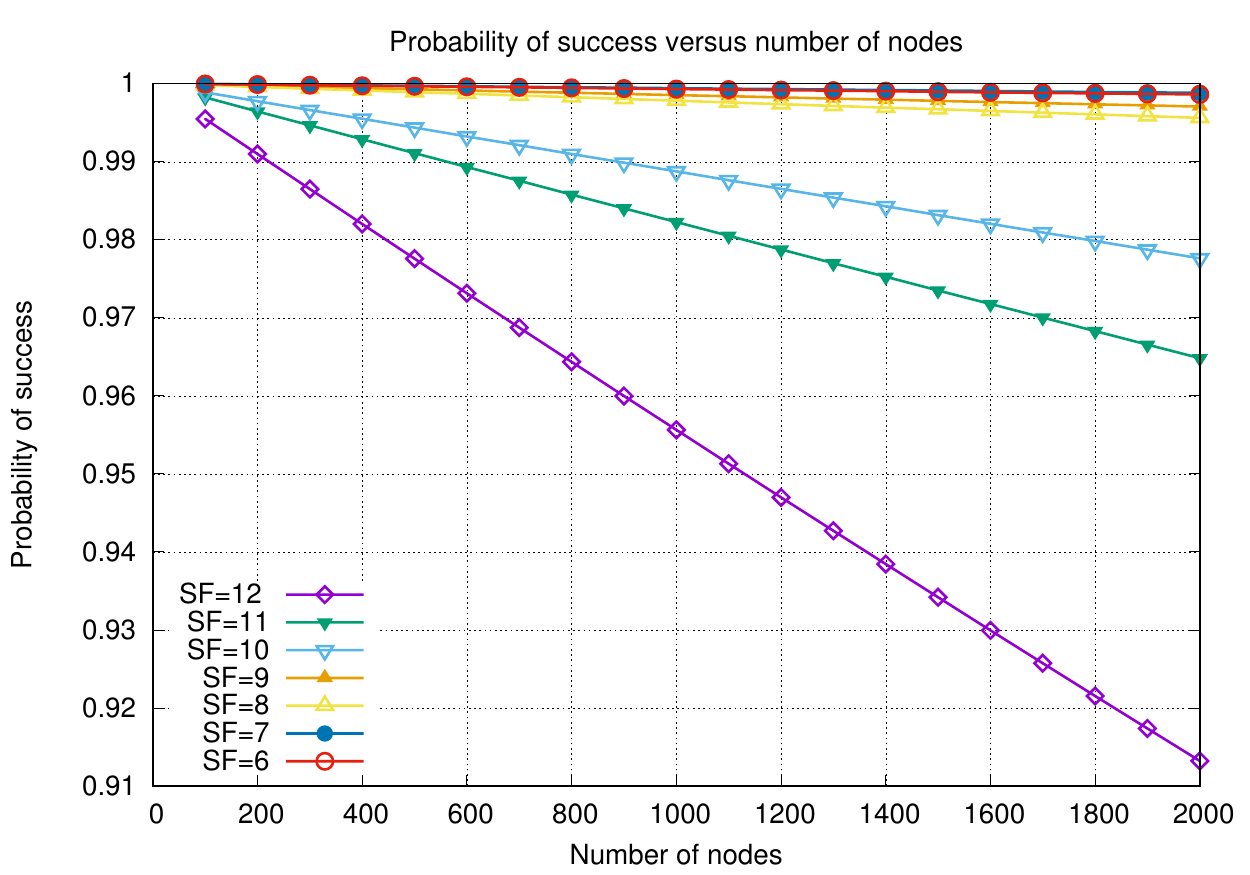}}
\caption{Probabilities of successful transmission
in an  in-homogeneous network with spatial density of nodes 
decaying with the distance to the base station as  $r^{-1/5}$.
\label{Plot4}}
\end{center}
\end{figure}

\section{Comparison to Aloha}
\label{s.Aloha}

Performance of LoRa can be compared to that of the classical, non-slotted Aloha. This can be done assuming 
the same Poisson rain model of packet transmissions.
Let us briefly recall  Aloha model assumptions and results obtained 
in~\cite{blaszczyszyn2008using} (the  context being  that of  sensors sending packets to a sink-node).

\subsection{Aloha SINR collision model}
In Aloha system the base station needs to be synchronized  
before receiving a packet. This can be done when 
 the base station is  not locked on a different packet at the given packet arrival time.
If this condition is satisfied,
the base station starts and  continues receiving the packet until the end of the
transmission. However,  the transmission may be  lost because of the
interference created by other packet emissions started during the reception,
in which case the error will be
detected only at the end of the reception. (These
interfering packets will be lost as well since the base station was
not idle to lock on them at their  arrival epochs.)

\subsection{Packet admission policy}
In order to improve Aloha efficiency, a packet admission policy 
is introduced. 
Once the base station detects a packet transmission
it may decide to receive or ignore the detected packet
according to some (possibly randomized) packet  admission policy,
based, for example, on  the  received power. The idea is to 
allow
the base station to  ignore some very weak packets,
whose successful reception is hardly likely,
as well as to drop some fraction of very strong packets (which will be transmitted at another time) in order to let the base station be  more often available for packets with a moderate received power. 
In the model proposed and studied in~\cite{blaszczyszyn2008using}
this admission policy took a form  of the probability $d(x)$
of  packet drop (even if the base station is available) applied for packets sent from the location $x$ on the plane.
For a given path-loss model it can be related to the mean received power. 
It was shown how to optimize it so as to  maximize the coverage or total
throughput in the network.

\subsection{Aloha model analysis}
Assuming the same Poisson rain model of packet transmission
as used in the present paper, with Rayleigh fading, the above Aloha SINR collision model, 
and some arbitrary probabilistic packet admission policy, it is possible to express the probability of successful reception 
of packets versus the transmission location $x$ on the plane.
The analysis is however more complicated than that of LoRa due to 
a more complicated (detailed) SINR collision model.

More specifically, denoting by  $\lambda$ the space-time density of transmissions, 
$B$  packet transmission time (the same for all transmissions),
the probability of successful reception of a packet transmitted from location $x$, by the base station located at the origin, can be factorized as follows
$$\Pi_x=\frac{d(x)}{1+\lambda B}\,\Pi'_x,$$
where $\frac{1}{1+\lambda B}$ is the (well known Erlang's formula for the)  probability that the packet finds the
base station  idle (ready to lock on the transmission) when it arrives,
$d(x)$ is the probability that it is not dropped by the admission policy, and 
 $\Pi'_x$ is the conditional probability that it is 
correctly  received (SINR condition satisfied), given the base station starts receiving~it.
Explicit evaluation of $\Pi'_x$ requires a fine analysis of the interference (averaged over the reception period) from  different types of packets: 
non-admissible (too weak power to lock on) and admissible ones arriving 
before and during the current reception.
Assuming Rayleigh fading, $\Pi'_x$ can be expressed using the Laplace transforms of these components of the interference and of the noise, as shown in~\cite[Fact~3.2 with the Laplace transforms given by~(9.1), (9.2), (9.3)]{blaszczyszyn2008using}.
These formulas and their approximations developed in~\cite{blaszczyszyn2008using} allow one to solve various optimization  problems regarding  $\Pi_x$ with respect to the packet admission policy $d(x)$.

A detailed comparison of the performance of  LoRa and Aloha using the Poisson rain framework is left for future work.

\section{Conclusion}
\label{sec_conc}
We have presented a simple model  to study 
the performance of the node-to-base-station link in a realistic, spatial deployment of a LoRa network, which is a type of 
low-power, wide-area  technology destined for the Internet of Things.
Crucial components of the model are the space-time Poisson process of packet transmissions (called in the literature the Poisson rain model)
and a packet collision model  accounting for LoRa intra-technology interference treatment, which has already been  validated with respect to measurements in laboratory conditions.
Our model allows for an explicit optimization of the packet reception probabilities with respect to numerous LoRa parameters,
taking into account possibly non-homogeneous node deployment on the plane and various types of wireless channel propagation effects. 
In particular,
it sheds light on how the, crucial to LoRa technology,  link budget thresholds (called sensitivities) impact the performance of different categories of transmissions.
We have studied only one base-station scenario; extensions to 
multi-base station networks are left for future work.
A fair comparison of LoRa  to non-slotted Aloha 
within the proposed framework is also possible and desirable. 


\end{document}